\documentclass[3p]{elsarticle}

\makeatletter
\def\ps@pprintTitle{%
 \let\@oddhead\@empty
 \let\@evenhead\@empty
 \def\@oddfoot{\centerline{\thepage}}%
 \let\@evenfoot\@oddfoot}
\makeatother

\usepackage{amssymb}
\usepackage{amsmath}
\usepackage{amsthm}
\usepackage{hyperref}

\theoremstyle{plain}
\newtheorem{thm}{Theorem}[section]
\newtheorem{lem}[thm]{Lemma}
\newtheorem{prop}[thm]{Proposition}
\newtheorem{cor}[thm]{Corollary}

\theoremstyle{definition}
\newtheorem{defn}{Definition}[section]
\newtheorem{exmp}{Example}[section]

\theoremstyle{remark}
\newtheorem*{rem}{Remark}

\journal{Information Processing Letters}

\begin{document}

\begin{frontmatter}

\title{On the longest common subsequence of Thue-Morse words}

\author{Joakim Blikstad}

\ead{joblikst@uwaterloo.ca}
\address{University of Waterloo, Canada}

\begin{abstract}
  The length $a(n)$ of the longest common subsequence of the $n$'th Thue-Morse
  word and its bitwise complement is studied.  An open problem suggested by
  Jean Berstel in 2006 is to find a formula for $a(n)$.  In this paper we prove
  new lower bounds on $a(n)$ by explicitly constructing a common subsequence
  between the Thue-Morse words and their bitwise complement.  We obtain the
  lower bound $a(n) = 2^{n}(1-o(1))$, saying that when $n$ grows large, the
  fraction of omitted symbols in the longest common subsequence of the $n$'th
  Thue-Morse word and its bitwise complement goes to $0$.  We further
  generalize to any prefix of the Thue-Morse sequence, where we prove similar
  lower bounds.
\end{abstract}

\begin{keyword}
Thue-Morse sequence \sep
Longest common subsequence \sep
Combinatorial problems

\end{keyword}

\end{frontmatter}

\section{Introduction}

The Thue-Morse sequence is a well known sequence in mathematics and computer
science, with many interesting properties.

The Thue-Morse sequence has a lot of self-symmetry in it, but is at the same
time cube-free and overlap-free (for a more in depth introduction to the
Thue-Morse sequence, see, for instance, Allouche and Shallit \citep{allouche}).

In 2006, \citet{berstel} formulated the problem of finding the length $a(n)$ of
the longest common subsequence between the $n$'th Thue-Morse word and its
bitwise complement.  By bitwise complement we mean replacing $0$ with $1$ and
$1$ with $0$.  This paper primarily studies $a(n)$ (sequence
\href{https://oeis.org/A297618}{A297618} on the \emph{Online Encyclopedia of
Integer Sequences} \cite{oeis}).  Since the Thue-Morse words are prefixes of
length $2^k$ for some $k$, of the Thue-Morse sequence, a natural generalization
is to consider other length prefixes of the Thue-Morse sequence.  This paper
also studies $b(n)$, the longest common subsequence between the length $n$
prefix of the Thue-Morse sequence and its bitwise complement (sequence
\href{https://oeis.org/A320847}{A320847}).

\begin{exmp}
  The first few values of $a(n)$ and $b(n)$ are:
  \begin{align*}
    a(1) &= 1 & b(1) &= 0 \\
    a(2) &= 2 & b(2) &= 1 \\
    a(3) &= 5 & b(3) &= 1 \\
    a(4) &= 12 & b(4) &= 2 \\
    a(5) &= 26 & b(5) &= 3 \\
    a(6) &= 54 & b(6) &= 4 \\
  \end{align*}
\end{exmp}

To show a lower bound for $a(n)$, it suffices to
construct a common subsequence of
the Thue-Morse words and their bitwise complements.
This is what is done in this paper, using the symmetries
of the sequence.
In particular,
we provide a recursive construction for such a common subsequence,
which has length at least
$2^{n}(1-\mathcal{O}(n^{-\log_2 3})) = 2^{n}(1-o(1))$.

This new lower bound is interesting as it means that $\frac{a(n)}{2^{n}}$
goes to $1$, that is when $n$ grows large the longest common subsequence
will only omit a vanishingly small fraction of symbols.

\section{Setup}
There are many equivalent
definitions of the Thue-Morse sequence and Thue-Morse words.
We will define them using morphisms.
\begin{defn}
  A morphism over an alphabet $\Sigma$ is a function
  $m:\Sigma^*\to\Sigma^*$ that satisfies $m(xy)=m(x)m(y)$
  (concatenation)
  for all $x,y\in \Sigma^*$. Note that this means $m$ is uniquely defined
  by its behaviour on $\Sigma$.
\end{defn}
\begin{defn}
  Let $\mu$ denote the morphism on $\{0,1\}$ defined by $\mu(0) = 01$ and $\mu(1) = 10$.
\end{defn}

There are some basic properties that follow directly from the definition.
\begin{prop}
  If $n\ge 0$ then
  \begin{enumerate}[(i)]
    \item $\mu^{n}(1) = \overline{\mu^{n}(0)}$ where $\overline{z}$ denotes taking the bitwise complement of $z$ (i.e., swapping 0s and 1s).
    \item $\mu^{m + n}(0) = \mu^{m}(\mu^{n}(0))$.
    \item $\left|\mu^{n}(0)\right| = 2^n$.
    \item $\mu^{n+1}(0) = \mu^{n}(0) \mu^n(1)$ and $\mu^{n+1}(1) = \mu^{n}(1) \mu^n(0)$.
  \end{enumerate}
  \label{prop:basic}
\end{prop}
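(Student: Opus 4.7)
The plan is to prove all four parts by induction on $n$, in an order that lets the later parts reuse the earlier ones. I would handle part (ii) first, since it is essentially a restatement of associativity of function composition: by definition $\mu^{m+n}$ is the $(m+n)$-fold iterate of $\mu$, which equals the composition of the $m$-fold and $n$-fold iterates, so $\mu^{m+n}(0) = \mu^m(\mu^n(0))$. No induction is really needed here beyond unfolding the definition of iteration.

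For part (iii) I would induct on $n$. The base case $|\mu^0(0)| = |0| = 1 = 2^0$ is immediate. For the inductive step, I would use the fact that $\mu$ doubles the length of any word: since $\mu$ is a morphism and $|\mu(0)| = |\mu(1)| = 2$, we get $|\mu(w)| = 2|w|$ by concatenating letter-by-letter. Combining this with part (ii) gives $|\mu^{n+1}(0)| = |\mu(\mu^n(0))| = 2\cdot 2^n = 2^{n+1}$.

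Part (iv) follows quickly from part (ii) together with the morphism property. Writing $\mu^{n+1}(0) = \mu^n(\mu(0)) = \mu^n(01)$ and then using that $\mu^n$ is a morphism gives $\mu^n(0)\mu^n(1)$; the case for $\mu^{n+1}(1)$ is symmetric, starting from $\mu(1)=10$.

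The one part requiring a genuine inductive argument is (i). The key auxiliary observation is that $\mu$ commutes with bitwise complementation, i.e.\ $\mu(\overline{w}) = \overline{\mu(w)}$ for every word $w$; this holds because $\mu(\overline{0}) = \mu(1) = 10 = \overline{01} = \overline{\mu(0)}$ (and analogously for $1$), and the morphism property lifts this to all words. With this lemma in hand, the base case $\mu^0(1) = 1 = \overline{0} = \overline{\mu^0(0)}$ is trivial, and the inductive step reads
\begin{equation*}
\mu^{n+1}(1) = \mu(\mu^n(1)) = \mu(\overline{\mu^n(0)}) = \overline{\mu(\mu^n(0))} = \overline{\mu^{n+1}(0)}.
\end{equation*}
I do not expect any serious obstacle; the only subtlety is remembering to prove the commutation $\mu \circ \overline{\,\cdot\,} = \overline{\,\cdot\,} \circ \mu$ before using it in the inductive step of (i).
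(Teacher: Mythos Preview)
Your proposal is correct and matches the paper's approach: the paper handles (ii) by noting it holds for all iterated morphisms, (iii) by induction using $|\mu(x)| = 2|x|$, (iv) via the same one-line computation $\mu^{n+1}(0) = \mu^n(01) = \mu^n(0)\mu^n(1)$, and (i) by appealing to the symmetry between $0$ and $1$ in the definition of $\mu$. Your treatment of (i) simply makes that symmetry explicit as the commutation $\mu(\overline{w}) = \overline{\mu(w)}$ and then runs the obvious induction, which is a more detailed but equivalent version of what the paper says.
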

\begin{proof}
  (i) follows from the symmetry (between $0$ and $1$) in the definition of $\mu$.
  (ii) holds for all morphisms. (iii) follows from an induction argument
  since $|\mu(x)| = 2|x|$ for every binary string $x$.
  (iv) can be seen from $\mu^{n+1}(0) = \mu^{n}(\mu(0)) = \mu^{n}(01)
  = \mu^{n}(0) \mu^n(1)$.
\end{proof}
\begin{defn}
  We call $\mu^n(0)$ the $n$'th \emph{Thue-Morse word}.
  We also say the \emph{Thue-Morse sequence}, denoted by $\mathbf{t}$,
  is the the unique fixed point of $\mu$
  (extended to the domain of infinite binary strings) beginning with a~$0$.
  See \citet{allouche} for why such a fixed point exists and is unique.
\end{defn}
\begin{defn}
  Denote by $a(n)$ the length of the longest common subsequence of
  $\mu^n(0)$ and $\mu^n(1)$.
  Similarly, denote by $b(n)$ the length of the longest common subsequence of
  the prefix of length $n$ of the Thue-Morse sequence and its bitwise complement.
\end{defn}
\begin{exmp} 
The first few Thue-Morse words are
  \[\mu^{0}(0) = 0,\quad
  \mu^{1}(0) = 01,\quad
  \mu^{2}(0) = 0110, \quad
  \mu^{3}(0) = 01101001.
  \]
The Thue-Morse sequence starts as follows $\mathbf{t} = 0110100110010110\ldots$
\end{exmp}
\begin{rem}
  The Thue-Morse words are sometimes defined by
  the recurrence relation in Proposition~\ref{prop:basic} part~(iv),
  and then the Thue-Morse sequence as the infinite application of this
  rule. We see that $n$'th Thue-Morse word is the prefix of length $2^n$
  of the Thue-Morse sequence. This also means that $b(2^n) = a(n)$.
\end{rem}

We also need the following proposition,
for which the proof can be found in \citep{allouche}.
\begin{prop}
  If $\mathbf{t} = t_0 t_1 t_2\ldots$ are the symbols of the Thue-Morse sequence
  we have $t_{2n} = t_{n}$ and $t_{2n+1} = \overline{t_{n}}$ for all
  $n\ge 0$. Moreover, $t_{n}$ equals the parity of the number of ``1'' bits in
  the binary representation of $n$.
  \label{prop:thue}
\end{prop}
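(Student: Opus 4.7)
The plan is to prove the two claims in sequence, using the first (the recursive identity) to derive the second (the binary-parity formula) by straightforward induction.

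For the recursive identity $t_{2n} = t_n$ and $t_{2n+1} = \overline{t_n}$, I would start from the defining property of the Thue-Morse sequence as the fixed point of $\mu$, namely $\mathbf{t} = \mu(\mathbf{t})$. Since $\mu$ replaces each symbol by a length-$2$ block, applying it to $\mathbf{t}$ sends position $n$ to positions $2n$ and $2n+1$ of the resulting sequence. Concretely, the two-symbol block $t_{2n} t_{2n+1}$ equals $\mu(t_n)$. A case split then finishes the argument: if $t_n = 0$ then $\mu(t_n) = 01$, giving $t_{2n} = 0 = t_n$ and $t_{2n+1} = 1 = \overline{t_n}$; if $t_n = 1$ then $\mu(t_n) = 10$, giving the symmetric conclusion. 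Here I am implicitly using that the fixed point exists and is unique (cited from \citep{allouche}), so the substitution equality $\mathbf{t} = \mu(\mathbf{t})$ is justified.

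For the binary-parity characterization, I would induct on $n$. The base case $n = 0$ is immediate since $t_0 = 0$ and $0$ has zero $1$-bits, whose parity is $0$. For the inductive step, fix $n \ge 1$ and consider its least significant bit. If $n = 2m$, then the binary expansion of $n$ is that of $m$ with a trailing $0$ appended, so $n$ and $m$ have the same number of $1$-bits; combining the inductive hypothesis for $m$ with the first part's identity $t_{2m} = t_m$ gives the result. If $n = 2m+1$, then $n$ has exactly one more $1$-bit than $m$, so the parity flips; combining the inductive hypothesis with $t_{2m+1} = \overline{t_m}$ again yields the claim.

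There is essentially no hard step here: the only subtlety is making sure one is entitled to use the equality $\mathbf{t} = \mu(\mathbf{t})$ on the infinite sequence rather than only on finite prefixes $\mu^n(0)$. This is the point where one has to invoke the existence and uniqueness of the fixed point, which is standard and explicitly cited. Once that is granted, both parts of the proposition are short formal manipulations, and the whole argument fits comfortably in a few lines.
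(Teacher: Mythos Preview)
Your argument is correct and is the standard proof of these facts. Note, however, that the paper does not give its own proof of this proposition at all: it simply cites \cite{allouche} for the result, so there is no in-paper proof to compare against. Your write-up is exactly the kind of short derivation one would find in that reference.
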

\begin{cor}
  The $(2i)$'th digit of $\mu^n(0)$ is the same as the
  $(2i+1)$'th digit of $\mu^n(1)$ (where we use zero-indexing).
  \label{cor:thue}
\end{cor}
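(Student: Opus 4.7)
The plan is to reduce both sides to $t_i$, the $i$'th symbol of the Thue-Morse sequence $\mathbf{t}$. The key inputs are Proposition~\ref{prop:thue} (which tells us $t_{2n}=t_n$ and $t_{2n+1}=\overline{t_n}$) and Proposition~\ref{prop:basic}(i) (which relates $\mu^n(1)$ to $\mu^n(0)$).

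First I would recall that, since $\mu^n(0)$ is the prefix of length $2^n$ of $\mathbf{t}$, its $(2i)$'th digit is simply $t_{2i}$, which by Proposition~\ref{prop:thue} equals $t_i$. Next, I would use Proposition~\ref{prop:basic}(i) to write $\mu^n(1) = \overline{\mu^n(0)}$, so its $(2i+1)$'th digit is $\overline{t_{2i+1}}$. Applying Proposition~\ref{prop:thue} again, $t_{2i+1} = \overline{t_i}$, and complementing gives $\overline{t_{2i+1}} = t_i$. Comparing the two, both digits equal $t_i$, which establishes the claim.

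The only small caveat to check is that the indices $2i$ and $2i+1$ actually fall within the length $2^n$ word, so the statement is only meaningful when $2i+1 < 2^n$; this is implicit in speaking of the ``$(2i+1)$'th digit of $\mu^n(1)$''. Given the direct application of the two cited propositions, there is no real obstacle here — the proof is essentially a two-line calculation.
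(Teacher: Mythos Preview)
Your proposal is correct and follows essentially the same argument as the paper: both reduce the $(2i)$'th digit of $\mu^n(0)$ to $t_{2i}=t_i$ and the $(2i+1)$'th digit of $\mu^n(1)$ to $\overline{t_{2i+1}}=t_i$ via Proposition~\ref{prop:thue}. The only difference is cosmetic—you make explicit the use of Proposition~\ref{prop:basic}(i) and the index-range caveat, which the paper leaves implicit.
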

\begin{proof}
  The $(2i)$'th digit of $\mu^n(0)$ is $t_{2i} = t_i$,
  and the $(2i+1)$'th digit of $\mu^n(1)$ is $\overline{t_{2i+1}}
  =t_{i}$, by the above proposition.
\end{proof}

\section{Construction of a common subsequence} \label{sec:construct}
We are now ready for a construction of a common subsequence between $\mu^n(0)$
and $\mu^n(1)$ when $n = 2^k$ is a power of $2$.  We call this common
subsequence $CS(k)$, and define it recursively.

\begin{itemize}
  \item
    When $k = 0, n = 2^{0} = 1$, and we define $CS(0) = 0$, a subsequence of
    $\mu(0) = \underline{0}1$ and $\mu(1) = 1\underline{0}$.

  \item
    For $k \ge 1$, $CS(k)$ will be defined recursively as follows.

    Let $n = 2^k$ and $m = 2^{k-1}$. Say $X = \mu^{n}(0)$ and $Y = \mu^{n}(1)$,
    that is, we are constructing $CS(k)$ as a common subsequence of $X$ and
    $Y$.  Write $X$ and $Y$ as concatenations of $2^m$ blocks of size $2^m$
    (since $|X| = |Y| = 2^n = (2^m)^2$ this is possible), say
    \begin{align*}
      X &= x_0 x_1 \cdots x_{2^m-1} \\
      Y &= y_0 y_1 \cdots y_{2^m-1}.
    \end{align*}
    Since $X = \mu^{2^k}(0) = \mu^{2^{k-1}}(\mu^{2^{k-1}}(0))$, each $x_i$ is
    one of $\mu^{m}(0)$ or $\mu^{m}(1)$. Similarly each $y_i$ is one of
    $\mu^{m}(0)$ or $\mu^{m}(1)$.  It is also worth noting that $x_i =
    \mu^m(d)$ if the $i$'th digit of $\mu^m(0)$ is $d$, and similarly $y_i =
    \mu^m(d)$ if the $i$'th digit of $\mu^m(1)$ is $d$.

    Now we compare $x_i$ to $y_{i+1}$ for $0\le i < 2^m-1$, and find a common
    subsequence $cs_i$ between them.
    \begin{itemize}
      \item
        When $i$ is even, $x_i = y_{i+1}$ by Corollary~\ref{cor:thue},
        so we take $cs_i = x_i$.
      \item
        When $i$ is odd, either $x_i$ and $y_{i+1}$ are the same, or one is
        $\mu^{m}(0)$ and the other is $\mu^{m}(1)$.  If they are the same we
        take $cs_i = x_i$, otherwise $cs_i = CS(k-1)$.
    \end{itemize}
    We then let $CS(k)$ be the concatenation of the $cs_i$'s.  \end{itemize}

\begin{exmp}
  The common subsequence $CS(0), CS(1)$, and $CS(2)$
  are underlined below:
\begin{align*}
  CS(0): \quad \mu^1(0)=&\ \underline{0}1 \\
               \mu^1(1)=&\ 1\underline{0} \\
  CS(1): \quad \mu^2(0)=&\ \underline{01}\ 10\\
               \mu^2(1)=&\ 10\ \underline{01}\\
  CS(2): \quad \mu^4(0)=&\ \underline{0110}\ 10\underline{01}\ \underline{1001}\ 0110 \\
               \mu^4(1)=&\ 1001\ \underline{0110}\ \underline{01}10\ \underline{1001}
\end{align*}
\end{exmp}
\begin{rem}
$CS(k)$ is not necessarily the longest common subsequence.
For example
\begin{align*}
  \mu^4(0)=&\ \underline{011}0\ \underline{1001}\ \underline{100}1\ \underline{01}10 \\
  \mu^4(1)=&\ 1\underline{0}0\underline{1}\ 0\underline{110}\ \underline{0110}\ 1\underline{0}0\underline{1}
\end{align*}
is the longest common subsequence between $\mu^4(0)$ and $\mu^4(1)$, which has length~$12$, while $|CS(2)| = 10$.
\end{rem}

\section{Analysis of length} \label{sec:analysis}
In this section we analyse the length of the common subsequence $CS(k)$
constructed in the previous section.
\begin{defn}
  For an integer $k\ge 0$, let $f(k) = |\mu^{2^{k}}(0)| - |CS(k)|
  = 2^{2^{k}} - |CS(k)|$
be the number of symbols omitted by the common subsequence $CS(k)$.
\end{defn}
\begin{rem}
$f(0) = 1$, as $|CS(0)| = 1$.
\end{rem}

When constructing $CS(k+1)$, all the even indexed blocks
(of size $2^{2^k}$)
in $\mu^{2^{k+1}}(0)$
are chosen to be in $CS(k+1)$. So only the
odd indexed blocks can contribute to $f(k+1)$.
The last block will be completely omitted,
and for the other blocks in odd positions
we either miss $f(k)$ if matching
$\mu^{2^{k}}(0)$ with $\mu^{2^{k}}(1)$ recursively,
or miss nothing if choosing to include the complete block.
This leads us to the following lemma.
\begin{lem}
  For every integer $k\ge 0$
  \[f(k+1) \le 2^{2^k} + \left(2^{2^k-1}-1\right)f(k).\]
  \label{lem:bound1}
\end{lem}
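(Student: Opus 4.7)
The plan is to unpack the construction of $CS(k+1)$ from Section~\ref{sec:construct} and sum the per-block contributions to $f(k+1) = 2^{2^{k+1}} - |CS(k+1)|$. Setting $m = 2^k$, both $X = \mu^{2^{k+1}}(0)$ and $Y = \mu^{2^{k+1}}(1)$ split into $2^m$ blocks of length $2^m$, and the construction forms $CS(k+1) = cs_0 cs_1 \cdots cs_{2^m - 2}$ where each $cs_i$ is a common subsequence of $x_i$ and $y_{i+1}$. Because only $2^m - 1$ of the $2^m$ block positions are used, the leading $y_0$ and the trailing $x_{2^m-1}$ are completely dropped; this already accounts for an unavoidable deficit of $2^m$ symbols, which is the additive $2^{2^k}$ in the bound.

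Next I would analyze the per-block contribution to $f(k+1)$. Among the $2^m - 1$ used indices $i \in \{0, 1, \ldots, 2^m - 2\}$, exactly $2^{m-1}$ are even and $2^{m-1} - 1$ are odd. For even $i$, Corollary~\ref{cor:thue} gives $x_i = y_{i+1}$, so the construction sets $cs_i = x_i$ with $|cs_i| = 2^m$, contributing nothing to the deficit. For odd $i$, either the two blocks happen to be equal (again contributing $0$) or one is $\mu^m(0)$ and the other $\mu^m(1)$, in which case $cs_i = CS(k)$ has length $2^m - f(k)$, contributing at most $f(k)$ to the deficit.

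Summing the boundary cost with the at-most-$f(k)$ contribution from each of the $2^{m-1}-1$ odd positions gives
\[
f(k+1) \;\le\; 2^m + (2^{m-1} - 1)\,f(k) \;=\; 2^{2^k} + (2^{2^k-1} - 1)\,f(k),
\]
which is the inequality to prove. The derivation is essentially bookkeeping and I do not expect any real obstacle; the only subtlety worth flagging is the off-by-one in the index counts caused by skipping $y_0$ and $x_{2^m-1}$, which is precisely what produces the ``$-1$'' in the coefficient of $f(k)$.
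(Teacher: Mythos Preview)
Your proof is correct and takes essentially the same approach as the paper: you account for the completely omitted last block of size $2^{2^k}$, note that the $2^{2^k-1}$ even-indexed blocks contribute nothing to the deficit by Corollary~\ref{cor:thue}, and bound each of the remaining $2^{2^k-1}-1$ odd-indexed blocks by $f(k)$. Your write-up is simply more explicit about the bookkeeping (in particular the off-by-one from the shifted pairing $(x_i,y_{i+1})$) than the paper's terse argument.
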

\begin{proof}
  The last block has size $2^{2^k}$,
  and there are $(2^{2^k-1}-1)$
  other odd indexed blocks, and in each we miss at most $f(k)$.
  So the lemma follows from the above discussion.
\end{proof}

We are now ready to prove an upper bound on $f(k)$.
\begin{lem}
  For every integer $k\ge 0$, $f(k)\le 2^{2^k-k+1}-2$.
  \label{lem:bound2}
\end{lem}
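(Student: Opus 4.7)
The plan is to prove the bound by induction on $k$. The base case $k = 0$ would be immediate: since $f(0) = 1$ and $2^{2^0 - 0 + 1} - 2 = 2$, we have $1 \le 2$.

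For the inductive step, I would assume $f(k) \le 2^{2^k - k + 1} - 2$ and substitute into the recurrence from Lemma~\ref{lem:bound1} to obtain
\[
f(k+1) \le 2^{2^k} + \bigl(2^{2^k-1}-1\bigr)\bigl(2^{2^k-k+1}-2\bigr).
\]
Expanding the product, the leading term is $2^{2^k-1}\cdot 2^{2^k-k+1} = 2^{2^{k+1}-k}$, exactly matching the leading term of the target $2^{2^{k+1}-(k+1)+1}-2 = 2^{2^{k+1}-k}-2$. After simplification, the initial $2^{2^k}$ cancels against one of the cross-terms, leaving
\[
f(k+1) \le 2^{2^{k+1}-k} - 2^{2^k-k+1} + 2.
\]
It would then suffice to verify $-2^{2^k-k+1}+2 \le -2$, which is equivalent to $2^k \ge k+1$ and holds for every $k \ge 0$.

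The hard part will just be keeping track of the doubly-exponential arithmetic without an off-by-one slip, especially since both $2^{2^k}$ and $2^{2^k-k+1}$ appear as exponents within exponents. Conceptually the argument is straightforward: Lemma~\ref{lem:bound1} was tailored so that the induction hypothesis delivers exactly the desired leading exponent, and the $-2$ constant in the proposed bound is precisely the slack needed to absorb the $-1$ and $-2$ constants appearing in the recurrence.
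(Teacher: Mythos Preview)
Your proposal is correct and essentially identical to the paper's proof: both argue by induction on $k$, substitute the hypothesis into Lemma~\ref{lem:bound1}, expand to reach $f(k+1)\le 2^{2^{k+1}-k}-2^{2^k-k+1}+2$, and finish by noting $2^{2^k-k+1}\ge 4$ (the paper phrases this as $2^k-k\ge 1$, which is the same as your $2^k\ge k+1$).
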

\begin{proof}
  We proceed by induction on $k$.

The inequality clearly holds for $k = 0$ since $f(0) = 1 \le 4-2 = 2^{2^{0}-0+1}-2$

Now suppose the inductive assertion holds for $k = s \ge 0$, that is $f(s)\le 2^{2^s-s+1}-2$.
Using Lemma~$\ref{lem:bound1}$ and the induction hypothesis we have
\begin{align*}
  f(s+1) &\le 2^{2^s} + (2^{2^{s}-1}-1)f(s) \\
  &\le 2^{2^s} + (2^{2^s-1}-1)(2^{2^s-s+1}-2) \\
  &= 2^{2^s} + 2^{2^s-1+2^s-s+1} - 2^{2^{s}-1}\cdot 2 - 2^{2^s-s+1} + 2 \\
  &= 2^{2^{s+1}-(s+1)+1} - 2^{2^s-s+1} + 2.
\end{align*}
  Note that $2^{2^s-s+1} \ge 4$ for all integers $s\ge 0$,
  since $2^s-s\ge 1$ for all integers $s\ge 0$. Thus
\begin{align*}
  f(s+1) &\le 2^{2^{s+1}-(s+1)+1} - 2^{2^s-s+1} + 2\\
  &\le 2^{2^{s+1}-(s+1)+1} - 2.
\end{align*}
This concludes the induction proof.
\end{proof}

By Lemma~$\ref{lem:bound1}$ it follows that
$f(k) \le 2^{2^k-k+1}-2 \le 2^{2^k-(k-1)}$ for all $k\ge 0$.
This means that the length of our constructed common subsequence
$CS(k)$ of $\mu^{n}(0)$ and $\mu^{n}(1)$
where $n = 2^k$ must be at least $2^{n} - f(k) \ge 2^{2^{k}}-2^{2^{k}-(k-1)} = 2^{2^k}(1-2^{-(k-1)}) =  2^{n}(1-\frac{1}{n/2})$. This proves
the following theorem.
\begin{thm}
For $k\ge 0$ and $n = 2^k$:
  \[|CS(k)| \ge 2^{n}\left(1-\frac{1}{n/2}\right)
  = 2^{2^{k}}\left(1-\frac{1}{2^{k-1}}\right).\]
\label{thm:pow2}
\end{thm}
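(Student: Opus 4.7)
The plan is that almost all of the work has already been done in Lemma~\ref{lem:bound1} and Lemma~\ref{lem:bound2}, and the theorem follows essentially by unwinding the definition of $f(k)$ and performing a short algebraic simplification.

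First I would recall that by definition $f(k) = 2^{2^k} - |CS(k)|$, so $|CS(k)| = 2^{2^k} - f(k)$. Plugging in the bound $f(k) \le 2^{2^k-k+1} - 2$ from Lemma~\ref{lem:bound2} already gives a valid lower bound, but to obtain the clean form stated in the theorem I would use the slightly weaker estimate $f(k) \le 2^{2^k - (k-1)}$, which is immediate since $2^{2^k - k + 1} - 2 \le 2^{2^k - k + 1} = 2^{2^k - (k-1)}$. Substituting this back yields $|CS(k)| \ge 2^{2^k} - 2^{2^k - (k-1)} = 2^{2^k}\bigl(1 - 2^{-(k-1)}\bigr)$, and since $n = 2^k$ implies $2^{k-1} = n/2$, this is exactly the claimed bound $2^n\bigl(1 - 1/(n/2)\bigr)$.

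The main obstacle is therefore not in the theorem itself but in establishing Lemma~\ref{lem:bound2}, which the inductive argument already handles. The only subtlety left is that for small $k$ (specifically $k = 0$ and $k = 1$) the right-hand side $2^n(1 - 1/(n/2))$ is non-positive, so the bound is trivially satisfied and no separate base case needs to be argued for the theorem statement itself. For $k \ge 2$ the bound is genuinely informative and follows verbatim from the chain of inequalities above.
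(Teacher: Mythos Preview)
Your proof is correct and follows essentially the same route as the paper: weaken the bound of Lemma~\ref{lem:bound2} to $f(k)\le 2^{2^k-(k-1)}$, substitute into $|CS(k)| = 2^{2^k}-f(k)$, and rewrite using $n=2^k$. Your remark that the bound is vacuous for $k\le 1$ is a nice addition but not needed, since the inequality chain holds for all $k\ge 0$ regardless.
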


\section{Extension to all $n$}
Up to this point we have only considered the common subsequence of
$\mu^n(0)$ and $\mu^n(1)$ where $n = 2^k$ for some $k\ge 0$.
We wish to extend our construction to work for arbitrary $n$.

If $n\ge 1$ and $n\not= 2^k$, then say $2^k < n < 2^{k+1}$ for some integer $k\ge 0$.
Write
\begin{align*}
  \mu^n(0) &= \mu^{n-2^k}(\mu^{2^k}(0)) \\
  \mu^n(1) &= \mu^{n-2^k}(\mu^{2^k}(1)).
\end{align*}
This is saying that $\mu^n(x)$ ($x \in \{0,1\}$) can be written
as $2^{n-2^k}$ blocks, where each block is either $\mu^{2^k}(0)$
or $\mu^{2^k}(1)$.
We can concatenate $2^{n-2^k}$ copies of the subsequence $CS(k)$
to obtain a common subsequence of $\mu^n(0)$ and $\mu^n(1)$,
i.e., we use our previous construction for each of the blocks independently.
Using Theorem~$\ref{thm:pow2}$ we see that the length of this common subsequence is at least $2^{n-2^k} (2^{2^k}(1-\frac{1}{2^{k-1}})) \ge 2^n (1-\frac{1}{n/4})$,
since $\frac{n}{4} \le 2^{k-1}$ by choice of $k$.
We thus get a similar result as Theorem~$\ref{thm:pow2}$ for arbitrary $n$.
\begin{thm}
  For every $n\ge 1$, there exists a common subsequence between
  $\mu^n(0)$ and $\mu^n(1)$ with length at least
  \[2^n\left(1-\frac{1}{n/4}\right).\]
  \label{thm:word-bound}
\end{thm}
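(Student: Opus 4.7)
The plan is to reduce to Theorem~\ref{thm:pow2} (the power-of-two case) via a block decomposition. Given $n \ge 1$, I would pick $k \ge 0$ so that $2^k \le n < 2^{k+1}$, which guarantees $2^{k-1} \ge n/4$; for very small $n$ the claimed bound $2^n(1 - 4/n)$ is non-positive and there is nothing to prove, so I can assume $n$ large enough that $k \ge 1$. When $n = 2^k$, Theorem~\ref{thm:pow2} already produces a common subsequence of length $\ge 2^n(1 - 2^{-(k-1)})$, and since $2^{k-1} = n/2 \ge n/4$ the desired bound follows immediately.

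For $2^k < n < 2^{k+1}$, the key step is to apply Proposition~\ref{prop:basic}(ii) to write $\mu^n(x) = \mu^{n-2^k}(\mu^{2^k}(x))$ for $x \in \{0,1\}$. This presents both $\mu^n(0)$ and $\mu^n(1)$ as concatenations of $2^{n - 2^k}$ blocks of length $2^{2^k}$, each block being either $\mu^{2^k}(0)$ or $\mu^{2^k}(1)$. The $i$-th block of $\mu^n(0)$ is $\mu^{2^k}(d_i)$ where $d_i$ is the $i$-th symbol of $\mu^{n-2^k}(0)$, and similarly the $i$-th block of $\mu^n(1)$ is $\mu^{2^k}(d'_i)$ where $d'_i$ is the $i$-th symbol of $\mu^{n-2^k}(1)$. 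By Proposition~\ref{prop:basic}(i), $d'_i = \overline{d_i}$, so in each block position exactly one of the two blocks is $\mu^{2^k}(0)$ and the other is $\mu^{2^k}(1)$.

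Next I would apply Theorem~\ref{thm:pow2} to each of these $2^{n-2^k}$ complementary block pairs to obtain, within each pair, a common subsequence of length at least $2^{2^k}(1 - 2^{-(k-1)})$. Concatenating these per-block common subsequences yields a common subsequence of $\mu^n(0)$ and $\mu^n(1)$ of total length at least
\[
2^{n-2^k}\cdot 2^{2^k}\bigl(1 - 2^{-(k-1)}\bigr) \;=\; 2^n\bigl(1 - 2^{-(k-1)}\bigr) \;\ge\; 2^n\Bigl(1 - \tfrac{1}{n/4}\Bigr),
\]
where the last inequality uses $2^{k-1} \ge n/4$.

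There is no serious technical obstacle here, since Theorem~\ref{thm:pow2} does the real work. The main points requiring care are (a) establishing the \emph{complementary blocks} property via Proposition~\ref{prop:basic}(i)-(ii), which is what enables the per-block invocation of Theorem~\ref{thm:pow2}, and (b) checking that the resulting per-block error rate $2^{-(k-1)}$ is bounded by $4/n$ after the choice of $k$. The slight looseness ($n/4$ rather than $n/2$) is exactly the cost of rounding $n$ down to the nearest power of two.
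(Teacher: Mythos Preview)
Your proposal is correct and follows essentially the same route as the paper: choose $k$ with $2^k \le n < 2^{k+1}$, factor $\mu^n(x) = \mu^{n-2^k}(\mu^{2^k}(x))$ into $2^{n-2^k}$ blocks, apply Theorem~\ref{thm:pow2} blockwise, and concatenate. Your explicit verification that corresponding blocks are complementary (via Proposition~\ref{prop:basic}(i)) is a nice clarification the paper omits, though strictly speaking it is unnecessary since $CS(k)$ is a subsequence of both $\mu^{2^k}(0)$ and $\mu^{2^k}(1)$ regardless of which one appears in each slot.
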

\begin{cor}
  $a(n) = 2^{n}(1-\mathcal{O}(n^{-1}))$, or more generally
  $a(n) = 2^{n}(1-o(1))$.
  \label{cor:an}
\end{cor}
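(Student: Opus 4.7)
The plan is to deduce the corollary directly from Theorem~\ref{thm:word-bound} together with the trivial upper bound $a(n) \le 2^n$. There is no new combinatorial work to do; the corollary is essentially a restatement of the theorem in asymptotic notation.

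First I would observe that, by definition, $a(n)$ is the maximum length of a common subsequence of $\mu^n(0)$ and $\mu^n(1)$, so every explicitly constructed common subsequence gives a lower bound on $a(n)$. Applying Theorem~\ref{thm:word-bound} then yields
\[
a(n) \;\ge\; 2^n\!\left(1 - \tfrac{4}{n}\right) \qquad \text{for all } n \ge 1.
\]
For the upper bound, any common subsequence of $\mu^n(0)$ and $\mu^n(1)$ has length at most $|\mu^n(0)| = 2^n$ by Proposition~\ref{prop:basic}(iii), so $a(n) \le 2^n$.

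Combining the two estimates gives $0 \le 2^n - a(n) \le 2^{n+2}/n$, or equivalently $1 - a(n)/2^n \le 4/n$. This is precisely the claim $a(n) = 2^n(1 - \mathcal{O}(n^{-1}))$. The weaker asymptotic $a(n) = 2^n(1 - o(1))$ then follows immediately, since any $\mathcal{O}(n^{-1})$ term is $o(1)$.

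The only subtlety worth flagging, rather than a genuine obstacle, is that $2^n(1 - 4/n)$ is a vacuous lower bound for $n \le 4$; but this is irrelevant for the asymptotic statement, which concerns the behaviour as $n \to \infty$. So the entire argument is a two-line chain of inequalities assembled from Theorem~\ref{thm:word-bound} and the trivial upper bound.
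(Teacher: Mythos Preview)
Your proposal is correct and is exactly the intended argument: the paper states the corollary immediately after Theorem~\ref{thm:word-bound} with no separate proof, precisely because it follows from that theorem together with the trivial bound $a(n)\le 2^n$. Your write-up simply makes this one-line deduction explicit.
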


We can generalize the result further to all prefixes of the Thue-Morse sequence.
Let $\mathbf{t}_n$ be the prefix of length $n$ of the Thue-Morse sequence,
and $\overline{\mathbf{t}}_n$ its bitwise complement.
Based on the binary representation of the number $n$,
$\mathbf{t}_n$ and $\overline{\mathbf{t}}_n$ can be split up into
at most $\lfloor\log_2(n)\rfloor+1$
blocks, each with a size which is a power of $2$.
We will assume the blocks are in order of decreasing size, so that
a block of size $2^k$ is either $\mu^k(0)$ or $\mu^k(1)$.
Then common subsequences satisfying the inequality in Theorem~\ref{thm:word-bound}
for these blocks can be concatenated
to form a common subsequence between
$\mathbf{t}_n$ and $\overline{\mathbf{t}}_n$.
To bound the length of this common subsequence we use the following lemma:

\begin{lem}
  $\sum_{k=1}^{s} \frac{2^{k}}{k} \le \frac{2^{s+2}}{s}-1$ for all $s\ge 1$.
  \label{lem:sum}
\end{lem}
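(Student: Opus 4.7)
The plan is to prove the inequality by induction on $s$. The key algebraic observation is that the inductive step, after subtracting the inductive hypothesis from the target bound, reduces to a very simple comparison of rational expressions in $s$.

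First I would check the base case $s=1$ directly: the left-hand side is $2/1 = 2$, and the right-hand side is $2^{3}/1 - 1 = 7$, so $2 \le 7$. For the inductive step, assuming the bound for some $s\ge 1$, I would add the new term $2^{s+1}/(s+1)$ to both sides of the hypothesis. Establishing the bound for $s+1$ then reduces to showing
\[
\frac{2^{s+2}}{s} + \frac{2^{s+1}}{s+1} \;\le\; \frac{2^{s+3}}{s+1}.
\]
After dividing by $2^{s+1}$, this is equivalent to $\tfrac{2}{s} + \tfrac{1}{s+1} \le \tfrac{4}{s+1}$, i.e., $\tfrac{2}{s} \le \tfrac{3}{s+1}$, which is in turn equivalent to $s \ge 2$.

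The only (minor) obstacle is that the induction step therefore fails at $s = 1 \to s = 2$: for $s=1$ one has $\tfrac{2}{1} = 2 > \tfrac{3}{2}$. I would handle this by verifying $s=2$ directly as a second base case: $\sum_{k=1}^{2} 2^{k}/k = 2 + 2 = 4$ and $2^{4}/2 - 1 = 7$, so the bound holds. Running the induction from $s\ge 2$ upward using the clean reduction above then completes the proof; aside from this bookkeeping, the argument is entirely elementary.
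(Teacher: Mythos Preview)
Your proof is correct and follows essentially the same approach as the paper: both verify $s=1$ and $s=2$ as base cases and carry out the induction step only for $s\ge 2$. The paper phrases the key inequality as $3s+2\le 4s$ rather than $\tfrac{2}{s}\le\tfrac{3}{s+1}$, but these are equivalent and the argument is the same.
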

\begin{proof}
  We prove the inequality by induction on $s$.

  For $s = 1$ we have $\sum_{k=1}^{s} \frac{2^{k}}{k} = 2 \le 7 = \frac{2^{s+2}}{s}-1$,
  and for $s = 2$ we have $\sum_{k=1}^{s} \frac{2^{k}}{k} = 4 \le 7 = \frac{2^{s+2}}{s}-1$.

  Now suppose $s\ge 2$ and
  $\sum_{k=1}^{s} \frac{2^{k}}{k} \le \frac{2^{s+2}}{s}$.
  This means that
  \begin{align*}
  \sum_{k=1}^{s+1} \frac{2^{k}}{k} 
  = \sum_{k=1}^{s} \frac{2^{k}}{k} + \frac{2^{s+1}}{s+1}
  \le \frac{2^{s+2}}{s}-1 + \frac{2^{s+1}}{s+1}
  = \frac{2^{s+1}(3s+2)}{s(s+1)} -1
  \le \frac{2^{s+1}(4s)}{s(s+1)} -1
  = \frac{2^{s+3}}{(s+1)} -1,
  \end{align*}
  which concludes the induction proof.
\end{proof}

Now we continue to analyse the common subsequence 
between $\mathbf{t}_n$ and $\overline{\mathbf{t}}_n$.
This subsequence
omits at most $\frac{2^{k+2}}{n}$ symbols
for the block of size $2^{k}$ (by Theorem~\ref{thm:word-bound}).
There is at most one block of size $2^k$ for each
$1\le k \le \lfloor \log_2(n)\rfloor$.
The potential block of size $1 = 2^{0}$ will miss at most one symbol.
Hence at most
\[1+\sum_{k=1}^{\lfloor\log_2(n)\rfloor} \frac{2^{k+2}}{k} =
1+4 \sum_{k=1}^{\lfloor\log_2(n)\rfloor} \frac{2^{k}}{k}\]
symbols are omitted, which by Lemma~\ref{lem:sum} is at most
\[1+4\left(\frac{2^{\lfloor\log_2(n)\rfloor+2}}{\lfloor \log_2(n)\rfloor}-1\right)
= \frac{2^{\lfloor\log_2(n)\rfloor+4}}{\lfloor \log_2(n)\rfloor}-3
\le \frac{n}{\lfloor\log_2(n)\rfloor/16}.\]
This proves the following theorem.

\begin{thm}
  For all $n\ge 1$, there exists a common subsequence between
  $\mathbf{t}_n$ and $\overline{\mathbf{t}}_n$ with length at least
  \[n\left(1-\frac{1}{\lfloor\log_2(n)\rfloor/16}\right).\]
\end{thm}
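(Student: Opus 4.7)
The plan is to follow the block decomposition strategy sketched in the paragraph preceding the theorem. First, I would write $n$ in binary as $n = 2^{k_1} + 2^{k_2} + \cdots + 2^{k_r}$ with $k_1 > k_2 > \cdots > k_r \geq 0$, which induces a natural split of $\mathbf{t}_n$ into $r \leq \lfloor\log_2 n\rfloor + 1$ consecutive blocks of sizes $2^{k_1}, \ldots, 2^{k_r}$ (and similarly for $\overline{\mathbf{t}}_n$). The key structural fact to verify is that each such block is of the form $\mu^{k_i}(0)$ or $\mu^{k_i}(1)$. This follows because the starting position of the $i$-th block is a multiple of $2^{k_i}$, so iterating the identity $t_{2m}=t_m$ from Proposition~\ref{prop:thue} shows that the block equals $\mu^{k_i}(d)$, where $d$ is determined by the parity of the number of $1$-bits in the higher-order part $2^{k_1}+\cdots+2^{k_{i-1}}$; the corresponding block of $\overline{\mathbf{t}}_n$ is then $\mu^{k_i}(\bar d)$.

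Next, I would apply Theorem~\ref{thm:word-bound} independently to each block of size $2^{k_i}$ with $k_i \geq 1$ to obtain a common subsequence of that block and its complement that omits at most $\frac{2^{k_i+2}}{k_i}$ symbols. If a block of size $1$ is present (the case $k_r = 0$, which occurs at most once), the best common subsequence of $0$ and $1$ is empty, so we simply lose that single symbol. Concatenating these per-block common subsequences in order yields a common subsequence of $\mathbf{t}_n$ and $\overline{\mathbf{t}}_n$.

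Finally, I would bound the total number of omitted symbols. Since the $k_i$ with $k_i \geq 1$ are distinct integers lying in $\{1, \ldots, \lfloor \log_2 n \rfloor\}$, the total cost is at most $1 + \sum_{k=1}^{\lfloor \log_2 n \rfloor} \frac{2^{k+2}}{k}$. Invoking Lemma~\ref{lem:sum} and simplifying, using $2^{\lfloor\log_2 n\rfloor} \leq n$, produces the upper bound $\frac{n}{\lfloor\log_2 n\rfloor/16}$ on omitted symbols, and the stated length bound follows by subtracting from $n$.

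The main obstacle is the structural verification in the first step: making precise why the blocks produced by the binary decomposition of $n$ really are of the form $\mu^{k_i}(0)$ or $\mu^{k_i}(1)$, and tracking which of the two occurs so that Theorem~\ref{thm:word-bound} genuinely applies to each block-vs-its-complement pair. After that, the argument is purely an arithmetic book\-keeping exercise driven by Lemma~\ref{lem:sum}; very small values of $n$, where $\lfloor\log_2 n\rfloor$ is too small for the stated bound to be meaningful, can be discarded since the theorem then becomes vacuous.
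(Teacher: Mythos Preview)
Your proposal is correct and follows essentially the same approach as the paper: binary decomposition of $n$ into blocks, application of Theorem~\ref{thm:word-bound} per block, handling the possible size-$1$ block separately, and then summing via Lemma~\ref{lem:sum}. You supply somewhat more detail than the paper on why each block really is of the form $\mu^{k_i}(0)$ or $\mu^{k_i}(1)$, which the paper simply asserts; otherwise the arguments coincide.
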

\begin{cor}
  $b(n) = n(1-\mathcal{O}(\frac{1}{\log n}))$, or more generally
  $b(n) = n(1-o(1))$.
  \label{cor:bn}
\end{cor}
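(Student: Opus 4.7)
The plan is to derive the corollary as an essentially immediate consequence of the preceding theorem, using only the trivial upper bound $b(n)\le n$ and standard asymptotic manipulations.

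First, recall that $b(n)$ is the length of the \emph{longest} common subsequence of $\mathbf{t}_n$ and $\overline{\mathbf{t}}_n$, so every explicit construction provides a lower bound. The preceding theorem supplies a common subsequence of length at least $n\bigl(1-16/\lfloor\log_2 n\rfloor\bigr)$, which immediately yields
\[
  b(n)\ge n\left(1-\frac{16}{\lfloor\log_2 n\rfloor}\right).
\]
Combining this with the trivial bound $b(n)\le n$ gives $n - b(n) \le 16n/\lfloor \log_2 n\rfloor$, i.e., $b(n) = n\bigl(1-\mathcal{O}(1/\log n)\bigr)$.

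Next I would pass to the weaker $o(1)$ statement by observing that $1/\lfloor\log_2 n\rfloor\to 0$ as $n\to\infty$, so $\mathcal{O}(1/\log n)=o(1)$; for the finitely many small $n$ where $\lfloor\log_2 n\rfloor$ is too small for the estimate to be meaningful, the bound $b(n)\le n$ together with the fact that these are only finitely many cases is enough to absorb them into the $o(1)$ term.

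There is no real obstacle here; the whole content is packaged in the previous theorem, and the corollary amounts to rewriting a concrete lower bound in asymptotic notation. The only mild care needed is to make sure the $\lfloor\cdot\rfloor$ in the denominator is handled when $n$ is so small that $\lfloor\log_2 n\rfloor \le 0$, which is handled by restricting to $n$ large enough (say $n\ge 2$) and noting the rest is absorbed by the asymptotic notation.
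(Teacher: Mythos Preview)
Your proposal is correct and is exactly the intended derivation: the paper states this corollary immediately after the theorem without any separate proof, so the content is precisely the observation you make---the explicit lower bound $n(1-16/\lfloor\log_2 n\rfloor)$ from the theorem, together with the trivial upper bound $b(n)\le n$, gives the asymptotic statements directly.
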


\section{Strengthening the analysis}
The constructed common subsequence $CS(k)$,
and the generalizations in the previous section,
does in fact have a slightly better asymptotic behaviour
than what was proven in Section \ref{sec:analysis}.

The previous length analysis was based on Lemma~\ref{lem:bound1}
which states that
$f(k+1)\le 2^{2^{k}} + \left(2^{2^{k}-1} - 1\right)f(k)$.
This inequality is only tight when
all $x_i \neq y_{i+1}$ for odd $0\le i < 2^{m}-1$,
using the same notation as in Section~\ref{sec:construct}.
However, we can get a better bound on $f(k+1)$ in terms
of $f(k)$ by estimating how many
of the blocks $x_i$ and $y_{i+1}$ are equal for odd $i$.

\begin{lem}
  If $\mathbf{t} = t_0 t_1 t_2\ldots$ are the digits of the Thue-Morse sequence,
  then $t_n = t_{n+1}$ if and only if $n$ written
  in binary ends with a block of $1$'s with odd length.
\end{lem}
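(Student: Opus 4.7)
The plan is to reduce the statement to a short parity-of-carries calculation, using Proposition~\ref{prop:thue}, which already tells us that $t_n$ equals the parity of $s(n)$, the number of $1$-bits in the binary representation of $n$. Under this translation, the condition $t_n = t_{n+1}$ becomes the statement that $s(n+1) - s(n)$ is even.

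Next I would analyze how $s$ changes when $n$ is incremented by $1$, by examining the schoolbook binary addition. Writing $n$ in binary, let $k \geq 0$ be the length of its trailing run of $1$'s; that is, $n$ ends in $\underbrace{1\cdots 1}_{k}$, preceded either by a $0$ or by nothing (the latter only when $n = 2^{k}-1$). Adding $1$ propagates a carry through those $k$ trailing $1$'s, turning them all into $0$'s, and then sets the next higher bit (which was $0$, or a new leading bit) to $1$. Hence $s(n+1) = s(n) - k + 1$, so
\[
s(n+1) - s(n) = 1 - k.
\]

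From this identity the lemma follows immediately: $1-k$ is even exactly when $k$ is odd, so $t_n = t_{n+1}$ iff the trailing block of $1$'s of $n$ has odd length, as claimed. The edge case $k = 0$ (that is, $n$ ending with a $0$, or $n = 0$) is consistent with the statement, since then $s$ increases by $1$ and indeed $t_n \neq t_{n+1}$.

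There is no real obstacle here; the only thing to be careful about is the boundary case $n = 2^{k}-1$, where the carry creates a brand new leading $1$ instead of flipping an existing $0$. One just has to observe that the bookkeeping $s(n+1) = s(n) - k + 1$ still holds, so the argument is unchanged.
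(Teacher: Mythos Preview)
Your proof is correct and follows essentially the same approach as the paper: both invoke Proposition~\ref{prop:thue} to translate $t_n = t_{n+1}$ into the parity condition $s(n) \equiv s(n+1) \pmod 2$, and then relate this to the length of the trailing block of $1$'s. The paper simply asserts the last equivalence, whereas you justify it explicitly via the carry computation $s(n+1) - s(n) = 1 - k$; this is a welcome elaboration but not a different method.
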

\begin{proof}
  We use Proposition~\ref{prop:thue}.
  $t_n = t_{n+1}$ if an only if $n$ and $n+1$
  have the same number of ``$1$'' bits modulo 2, when written in binary.
  This condition is equivalent to $n$
  ending with a block of $1$'s of odd length when written in binary.
\end{proof}
\begin{lem}
  Let $eq(n) = |\{i : 0\le i < 2^n-1\text{ and } t_i = t_{i+1}\}|$.
  Then 
  \begin{align*}
    eq(n) = \begin{cases}
      \frac{1}{3}(2^n-1) & \text{if $n$ is even}\\
      \frac{1}{3}(2^n-2) & \text{if $n$ is odd}\\
    \end{cases}.
  \end{align*}
  \label{lem:sn}
\end{lem}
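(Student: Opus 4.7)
The plan is to derive a first-order linear recurrence for $eq(n)$ and then verify the claimed closed form by a short induction on $n$.

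For the recurrence, I would partition the indices $0 \le i < 2^n - 1$ by parity and apply Proposition~\ref{prop:thue}. An even index $i = 2j$ contributes nothing to $eq(n)$, since $(t_{2j}, t_{2j+1}) = (t_j, \overline{t_j})$ is always a pair of distinct symbols. For an odd index $i = 2j+1$, the bound $i < 2^n - 1$ restricts $j$ to $\{0, 1, \ldots, 2^{n-1} - 2\}$; on this range $(t_{2j+1}, t_{2j+2}) = (\overline{t_j}, t_{j+1})$, which is a pair of equal symbols precisely when $t_j \neq t_{j+1}$. Since $j$ ranges over a set of size $2^{n-1} - 1$ on which exactly $eq(n-1)$ values satisfy $t_j = t_{j+1}$, the complementary count gives
\[
eq(n) = (2^{n-1} - 1) - eq(n-1), \qquad n \ge 1,
\]
together with the trivial base case $eq(0) = 0$.

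Having the recurrence, the closed form falls out by induction on $n$, split into two cases according to the parity of $n$. In each case the inductive hypothesis supplies the value of $eq(n-1)$ using the formula for the opposite parity, and the recurrence collapses to the claimed expression after a one-line arithmetic check (for instance, when $n$ is even, $2^{n-1} - 1 - \tfrac{1}{3}(2^{n-1} - 2) = \tfrac{1}{3}(2^n - 1)$).

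There is no substantive obstacle. The only point requiring care is the upper endpoint of the $j$-range when isolating the odd indices: the condition is $2j + 1 < 2^n - 1$ rather than $2j + 1 < 2^n$, and keeping this off-by-one correct is what makes the reduction land exactly on $eq(n-1)$ rather than an inconvenient variant.
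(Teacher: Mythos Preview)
Your argument is correct. The parity split via Proposition~\ref{prop:thue} cleanly isolates the odd indices as the only possible contributors, the off-by-one on the $j$-range is handled properly, and the resulting recurrence $eq(n)=(2^{n-1}-1)-eq(n-1)$ with $eq(0)=0$ does yield the claimed closed form by the two-case induction you describe.

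The route, however, is genuinely different from the paper's. The paper first establishes (in the lemma immediately preceding this one) that $t_i=t_{i+1}$ holds exactly when the binary expansion of $i$ ends in an odd-length block of $1$'s, and then computes $eq(n)$ directly as the geometric sum $\sum_{k\ge 1}2^{\,n-2k}$ over admissible tail lengths, which evaluates to $\tfrac{1}{3}(2^n-1)$ or $\tfrac{1}{3}(2^n-2)$ according to parity. Your approach bypasses that binary-digit characterization entirely and works purely from the self-similarity relations $t_{2j}=t_j$, $t_{2j+1}=\overline{t_j}$ to obtain a recurrence. The trade-off: the paper's count is a one-shot direct evaluation but depends on an auxiliary structural lemma, whereas your argument is more self-contained (needing only Proposition~\ref{prop:thue}) at the cost of an extra induction step to resolve the recurrence.
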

\begin{proof}
  For a fixed $n$, we count how many $n$-bit numbers (except $2^n-1$) which
  ends with a block of $1$'s of odd length.  We can fix the $n$-bit number to
  end with a ``$0$'' followed by $2k-1$ ``$1$''s, for different values of $k$,
  and then have $2^{n-2k}$ possibilities for the leading digits.
  This works as we do not wish to count $2^n-1$, which is the unique $n$-bit binary
  number with all ``1''s.
  \begin{itemize}
    \item 
      If $n = 2m$ is even $eq(n) = \sum_{k=1}^{m} 2^{n-2k} = \frac{1}{3}(2^n-1)$.
    \item 
      If $n=2m+1$ is odd, then $eq(n) = \sum_{k=1}^{m} 2^{n-2k} = \frac{1}{3}(2^n-2)$.
      \qedhere{}
  \end{itemize}
\end{proof}

By Proposition~\ref{prop:thue}
we see that 
\[x_{2i+1} = y_{2i+2} \iff
t_{2i+1} = \overline{t_{2i+2}} \iff
\overline{t_i} = \overline{t_{i+1}} \iff
t_i = t_{i+1}\]

By Lemma~\ref{lem:sn} we thus know that when constructing
$CS(k+1)$, exactly
$eq(2^{k}-1)$
of the odd indexed blocks will already be equal. Hence
exactly $(2^{2^{k}-1}-1)-eq(2^{k}-1)$
of the $(x_i, y_{i+1})$ pairs will need to be recursively matched
using $CS(k)$. This leads to the following improved version of Lemma~\ref{lem:bound1}:
\begin{lem}
  For every integer $k\ge 1$,
  \[f(k+1) = 2^{2^k} + \left(2^{2^k-1}-1 - eq(2^{k}-1)\right)f(k)
  = 2^{2^k} + \left(2^{2^k-1} -1- \frac{1}{3}(2^{2^k-1}-2)\right)f(k).\]
\end{lem}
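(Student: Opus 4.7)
The plan is to upgrade Lemma~\ref{lem:bound1} from an inequality to an equality by counting, rather than bounding, how many of the odd-indexed sub-matches in the construction of $CS(k+1)$ actually cost $f(k)$. The whole improvement rests on the key observation made just before the lemma: via Proposition~\ref{prop:thue}, the block-equality condition $x_{2j+1} = y_{2j+2}$ is equivalent to the digit-equality condition $t_j = t_{j+1}$, and Lemma~\ref{lem:sn} then evaluates the number of such $j$ exactly.

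First I would set $m = 2^k$ and decompose $X = \mu^{2^{k+1}}(0) = x_0 \cdots x_{2^m - 1}$ and $Y = \mu^{2^{k+1}}(1) = y_0 \cdots y_{2^m - 1}$ as in Section~\ref{sec:construct}, so that $x_i = \mu^m(t_i)$ and $y_i = \mu^m(\overline{t_i})$. Revisiting the construction of $CS(k+1) = cs_0 cs_1 \cdots cs_{2^m - 2}$, the final block $x_{2^m - 1}$ is omitted entirely (contributing exactly $2^{2^k}$), the even-indexed pairs contribute nothing by Corollary~\ref{cor:thue}, and for each odd $i$ the contribution is exactly $0$ if $x_i = y_{i+1}$ and exactly $f(k)$ otherwise. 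The latter is exact (not an upper bound) because $x_i \neq y_{i+1}$ forces $\{x_i, y_{i+1}\} = \{\mu^m(0), \mu^m(1)\}$ and the construction then sets $cs_i = CS(k)$, a common subsequence whose omission count equals $f(k)$ by definition. So Lemma~\ref{lem:bound1} can be sharpened to an equality as soon as the count of odd $i$ with $x_i = y_{i+1}$ is pinned down.

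Next I would count those odd $i \in \{1, 3, \ldots, 2^m - 3\}$ with $x_i = y_{i+1}$. Writing $i = 2j + 1$ gives a bijection onto $j \in \{0, 1, \ldots, 2^{m-1} - 2\}$, and by the equivalence chain displayed just above the lemma this is equivalent to $t_j = t_{j+1}$. Since $m - 1 = 2^k - 1$, this index range coincides exactly with the one used in the definition of $eq(2^k - 1)$, so the count is $eq(2^k - 1)$. Thus $(2^{2^k - 1} - 1) - eq(2^k - 1)$ odd indices incur the recursive cost $f(k)$, and adding the final block's cost $2^{2^k}$ yields the first equality. The second equality is a direct substitution: $2^k - 1$ is odd for every $k \ge 1$, so Lemma~\ref{lem:sn} gives $eq(2^k - 1) = \tfrac{1}{3}(2^{2^k - 1} - 2)$. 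I do not foresee a serious obstacle; the main care is in verifying the index translation $i \leftrightarrow j$ and confirming the range matches the one in the definition of $eq(2^k - 1)$.
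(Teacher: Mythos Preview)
Your proposal is correct and follows essentially the same approach as the paper: the paper does not give a separate proof block for this lemma but instead derives it from the discussion immediately preceding it, using the equivalence chain $x_{2i+1}=y_{2i+2}\iff t_i=t_{i+1}$ and Lemma~\ref{lem:sn} to count exactly how many odd-indexed pairs require the recursive $CS(k)$. Your write-up is simply a more explicit version of that same argument, including the index translation $i=2j+1$ and the parity check on $2^k-1$ needed to select the correct case of Lemma~\ref{lem:sn}.
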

\begin{rem}
  From the above lemma, we can solve for $f(k)$ exactly. The
  first few values for $k\ge 0$ are:
  \[f(k) = 1, 2, 6, 46, 4166, 91071806, 130383480383828886, \ldots\]
\end{rem}
\begin{cor}
  Let $w = \log_2(3)\approx 1.58$.
  For every integer $k\ge 1$, $f(k+1) \le 2^{2^k} + 2^{2^k-w}f(k)$.
  \label{cor:bound1}
\end{cor}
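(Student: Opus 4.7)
The plan is to derive the corollary directly from the exact identity in the preceding lemma, since only the coefficient of $f(k)$ needs to be bounded. Writing the coefficient as $C_k := 2^{2^k-1} - 1 - \tfrac{1}{3}(2^{2^k-1}-2)$, the claim $f(k+1) \le 2^{2^k} + 2^{2^k-w}f(k)$ reduces to showing $C_k \le 2^{2^k-w} = 2^{2^k}/3$, because $2^{-w}=2^{-\log_2 3} = 1/3$.

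The first step is to combine the two $2^{2^k-1}$ contributions: pulling out $2^{2^k-1}$, one obtains
\[
C_k = 2^{2^k-1}\left(1 - \tfrac{1}{3}\right) + \left(-1 + \tfrac{2}{3}\right) = \tfrac{2}{3}\cdot 2^{2^k-1} - \tfrac{1}{3} = \tfrac{2^{2^k}-1}{3}.
\]
The second step is to drop the harmless $-1$, yielding $C_k \le 2^{2^k}/3 = 2^{2^k-w}$. Plugging this into the equality from the previous lemma gives the stated inequality for all $k\ge 1$.

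There is essentially no obstacle here beyond the bookkeeping; the content of the statement lives entirely in the exact formula of the previous lemma, and the point of the corollary is simply to repackage the coefficient in a clean exponential form using $1/3 = 2^{-\log_2 3}$ that will be convenient for asymptotic arguments later. One minor sanity check worth performing is the small case $k=1$: $C_1 = 2 - 1 - 0 = 1 = (2^2-1)/3$, which matches the simplification and is indeed at most $4/3 = 2^{2-w}$, confirming that the bound is tight up to the additive $1/3$.
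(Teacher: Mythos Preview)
Your proof is correct and follows essentially the same approach as the paper: both start from the exact identity of the preceding lemma and bound the coefficient of $f(k)$ by $\tfrac{2}{3}\cdot 2^{2^k-1} = 2^{2^k-w}$. You carry the simplification one step further to the closed form $C_k = (2^{2^k}-1)/3$, but this is the same argument.
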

\begin{proof} If $k\ge 1$, we have by the lemma
  \[f(k+1) = 2^{2^k} + \left(2^{2^k-1} -1- \frac{1}{3}(2^{2^k-1}-2)\right)f(k)
  \le
  2^{2^k} + \frac{2}{3}2^{2^k-1}f(k) = 
  2^{2^k} + 2^{2^k-w}f(k).\]
\end{proof}
By a similar induction proof as in Lemma \ref{lem:bound2}
we get a new upper bound on $f$.
\begin{thm}
  Let $w = \log_2(3)\approx 1.58$.
  For every integer $k\ge 0$, $f(k)\le 2^{2^k-wk+3}-6$.
  \label{thm:bound2}
\end{thm}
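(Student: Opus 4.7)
The plan is to mirror the induction proof of Lemma~\ref{lem:bound2}, driving the recursion with Corollary~\ref{cor:bound1} in place of Lemma~\ref{lem:bound1}. I would first verify the asserted bound in the base cases $k = 0, 1, 2$ directly using $f(0) = 1$, $f(1) = 2$, $f(2) = 6$ (the first three values in the remark preceding Corollary~\ref{cor:bound1}); a short numerical check with $w = \log_2 3$ gives the targets $2^{4} - 6 = 10$, $2^{5-w} - 6 = 32/3 - 6$, and $2^{7-2w} - 6 = 128/9 - 6$, each of which comfortably exceeds the corresponding $f$-value.

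For the inductive step at $k \ge 2$, I assume $f(k) \le 2^{2^k - wk + 3} - 6$ and substitute into Corollary~\ref{cor:bound1}:
\begin{align*}
  f(k+1)
  &\le 2^{2^k} + 2^{2^k - w}\bigl(2^{2^k - wk + 3} - 6\bigr) \\
  &= 2^{2^k} + 2^{2^{k+1} - w(k+1) + 3} - 6 \cdot 2^{2^k - w}.
\end{align*}
The pivotal algebraic identity is $6 \cdot 2^{-w} = 6/3 = 2$, hence $6 \cdot 2^{2^k - w} = 2^{2^k + 1}$, which absorbs the stray $2^{2^k}$ term and leaves
\begin{align*}
  f(k+1) \le 2^{2^{k+1} - w(k+1) + 3} - 2^{2^k}.
\end{align*}
Since $2^{2^k} \ge 2^4 = 16 \ge 6$ whenever $k \ge 2$, this yields $f(k+1) \le 2^{2^{k+1} - w(k+1) + 3} - 6$, closing the induction.

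The main subtlety is that the constant ``$-6$'' in the stated bound is tightly calibrated to the value of $w$: it is chosen precisely so that $6 \cdot 2^{-w} = 2$, which is what lets the $-6$ in the inductive hypothesis combine cleanly with the additive $2^{2^k}$ coming from Corollary~\ref{cor:bound1}. A larger constant would prevent the step from closing, while a smaller one would break the base cases. A secondary, minor obstacle is that Corollary~\ref{cor:bound1} is only stated for $k \ge 1$, and the final absorption step requires $2^{2^k} \ge 6$, forcing $k \ge 2$; this is why three base cases, rather than just one, must be handled directly before the induction can take over.
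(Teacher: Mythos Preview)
Your proof is correct and follows essentially the same approach as the paper: induction on $k$ with base cases $k \le 2$ verified directly, and the inductive step driven by Corollary~\ref{cor:bound1} together with the identity $6 \cdot 2^{-w} = 2$ to collapse the extra $2^{2^k}$ term. Your write-up is in fact a bit more explicit than the paper's, which simply writes ``$-2\cdot 2^{2^s}$'' without calling out why the $6$ and the $w$ interact so cleanly.
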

\begin{proof}
  We proceed by induction on $k$.

  It is easy to verify that the inequality holds for $k\le 2$.

Now suppose the inductive assertion holds for $k = s \ge 2$, that is $f(s)\le 2^{2^s-ws+3}-6$.
Using Corollary~$\ref{cor:bound1}$ and the induction hypothesis we have
\begin{align*}
  f(s+1) &\le 2^{2^s} + 2^{2^{s}-w}f(s) \\
  &\le 2^{2^s} + 2^{2^s-w}(2^{2^s-ws+3}-6) \\
  &= 2^{2^s} + 2^{2^s-w+2^s-ws+3} - 2\cdot2^{2^{s}} \\
  &= 2^{2^{s+1}-w(s+1)+3} - 2^{2^{s}} \\
  &\le 2^{2^{s+1}-w(s+1)+3} - 6
\end{align*}
since $2^{2^{s}} \ge 6$ when $s\ge 2$. This concludes the induction proof.
\end{proof}

This means that the length of the common subsequence
$CS(k)$ is
\[2^{2^k} - f(k)\ge 2^{2^k} - 2^{2^{k}-wk+3}
= 2^{2^k}\left(1-\frac{1}{2^{wk}/8}\right)
= 2^{2^k}\left(1-\frac{1}{3^{k}/8}\right).
\]

This asymptotic behaviour propagate through the other generalizations,
and we obtain a slightly better versions of Corollaries~\ref{cor:an}
and \ref{cor:bn}.

\begin{thm}
  $a(n) = 2^{n}(1-\mathcal{O}(\frac{1}{n^w}))$
  and
  $b(n) = n\left(1-\mathcal{O}\left(\frac{1}{(\log n)^w}\right)\right)$
  where $w = \log_2(3) \approx 1.58$.
\end{thm}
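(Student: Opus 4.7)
The plan is to propagate the improved per-level bound from Theorem~\ref{thm:bound2}, namely $|CS(k)| \ge 2^{2^k}(1 - 8/3^k)$, through the two extension arguments already set up in Section 5. This gives strengthened versions of Corollaries~\ref{cor:an} and~\ref{cor:bn}. The structure of each extension will be reused verbatim, with the stronger base bound substituted in.

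First I would handle the $a(n)$ claim. When $n = 2^k$ the bound is immediate, since $3^k = 2^{wk} = n^w$. For general $n$, choose $k$ with $2^k \le n < 2^{k+1}$ and write $\mu^n(x) = \mu^{n-2^k}(\mu^{2^k}(x))$ as $2^{n-2^k}$ blocks of size $2^{2^k}$, then concatenate that many copies of $CS(k)$, just as in Section 5. The total omitted count is at most $2^{n-2^k} f(k) \le 2^n \cdot 8/3^k$, and $k > \log_2(n) - 1$ forces $3^k > n^w/3$, yielding $a(n) \ge 2^n(1 - O(n^{-w}))$.

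For $b(n)$ I would split $\mathbf{t}_n$ into blocks of sizes $2^{k_1} > 2^{k_2} > \cdots$ given by the binary expansion of $n$, exactly as in Section 5. Each size-$2^k$ block now contributes at most $O(2^k/k^w)$ omitted symbols by the strengthened $a$-bound just proved (applied with parameter $k$), and the possible size-$1$ block contributes at most $1$. Summing, the total omission is at most $O\!\left(\sum_{k=1}^{\lfloor \log_2 n\rfloor} 2^k/k^w\right)$, and it remains to show this is $O(n/(\log n)^w)$.

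The only step that needs actual proving is an analogue of Lemma~\ref{lem:sum}, namely
\[
\sum_{k=1}^{s} \frac{2^k}{k^w} \;=\; O\!\left(\frac{2^s}{s^w}\right).
\]
This sum is dominated by its final term, because the ratio of consecutive terms $\tfrac{2^{k-1}/(k-1)^w}{2^k/k^w} = \tfrac{1}{2}(k/(k-1))^w$ is bounded strictly below $1$ for $k$ sufficiently large, so a short induction in the style of Lemma~\ref{lem:sum} should close it. The main (minor) obstacle is bookkeeping the constant and the small-$k$ base cases, but since the conclusion is stated with $\mathcal{O}(\cdot)$ the exact constant is irrelevant. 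Plugging $2^{\lfloor \log_2 n\rfloor} \le n$ then gives $b(n) \ge n(1 - O((\log n)^{-w}))$, completing the theorem.
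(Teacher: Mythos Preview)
Your proposal is correct and follows exactly the approach the paper indicates: the paper's own ``proof'' of this theorem is nothing more than the one-line remark that the improved bound from Theorem~\ref{thm:bound2} ``propagate[s] through the other generalizations'' of Section~5, and you have carried out precisely that propagation, including the natural analogue of Lemma~\ref{lem:sum} for the exponent $w$. If anything, your write-up is more detailed than what the paper provides.
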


\section{Acknowledgment}
I thank Jeffrey Shallit for telling me about the problem.

\bibliographystyle{elsarticle-num-names}

\end{document}